\title[Offset-free setpoint tracking using neural network controllers]{Offset-free setpoint tracking using neural network controllers}
\author{%
 \Name{Patricia Pauli} \Email{patricia.pauli@ist.uni-stuttgart.de}\\
 \addr Institute for Systems Theory and Automatic Control, University of Stuttgart%
 \AND
 \Name{Johannes Köhler} \Email{jkoehle@ethz.ch}\\
 \addr Institute for Dynamic Systems and Control, ETH Zurich%
 \AND
 \Name{Julian Berberich} \Email{julian.berberich@ist.uni-stuttgart.de}\\
 \Name{Anne Koch} \Email{anne.koch@ist.uni-stuttgart.de}\\
 \Name{Frank Allgöwer} \Email{frank.allgower@ist.uni-stuttgart.de}\\
 \addr Institute for Systems Theory and Automatic Control, University of Stuttgart%
}
\begin{document}

\maketitle

\begin{abstract}%
In this paper, we present a method to analyze local and global stability in offset-free setpoint tracking using neural network controllers and we provide ellipsoidal inner approximations of the corresponding region of attraction. We consider a feedback interconnection of a linear plant in connection with a neural network controller and an integrator, which allows for offset-free tracking of a desired piecewise constant reference that enters the controller as an external input. 
Exploiting the fact that activation functions used in neural networks are slope-restricted, we derive linear matrix inequalities to verify stability using Lyapunov theory. After stating a global stability result, we present less conservative local stability conditions (i) for a given reference and (ii) for any reference from a certain set. The latter result even enables guaranteed tracking under setpoint changes using a reference governor which can lead to a significant increase of the region of attraction. Finally, we demonstrate the applicability of our analysis by verifying stability and offset-free tracking of a neural network controller that was trained to stabilize a linearized inverted pendulum.
\end{abstract}

\begin{keywords}%
  Neural networks, neural network controllers, sector bounds, reference governor%
\end{keywords}

\section{Introduction}
Alongside with the flourishing success of deep learning methods, in recent times, there has been an increasing interest in feedback interconnections with neural network (NN) controllers. NNs as universal function approximators can learn any desired, nonlinear behavior with high precision and there are efficient ways to train them, e.g., using backpropagation and deep reinforcement learning (RL) \citep{sutton2018reinforcement}. NNs are useful whenever no precise model is available and in case it is expensive to obtain a model but data can be collected easily \citep{levin1993control, psaltis1988multilayered, ibrahim2020learning}. NN controllers that are determined from supervised learning can approximate more complicated controllers and, in comparison to the original controller, they might save computation time \citep{hertneck2018learning}. Another practically highly relevant application of NN controllers is in terms of deep RL, e.g., using inverse reinforcement learning where RL is used to imitate human behavior \citep{yin2020imitation}. However, even though NN controllers may work well in practice, they lack safety guarantees \citep{szegedy2013intriguing} which is the reason why they are typically not used in safety-critical applications such as autonomous driving.

Recently, efforts have been made to give robustness and stability guarantees for NNs and feedback interconnections with NN controllers. \cite{fazlyab2019efficient} certify Lipschitz bounds that serve as a proxy for robustness of NNs and \cite{pauli2020training, revay2020lipschitz} impose them during training. Alongside with robustness guarantees, there are recent works on stability guarantees, e.g., \cite{revay2020contracting, yin2020stability} present results on stability of recurrent neural networks using incremental $\mathcal{L}_2$-bounds and feedback interconnections using Lyapunov theory, respectively. The underlying property that is exploited in the works stated above is \textit{slope-restriction}, i.e., the slopes of activation functions comply with certain bounds, which can be formulated as an incremental quadratic constraint. Lipschitz continuity and closed-loop stability, respectively, are then verified using semidefinite programming.

In this paper, we extend the stability analysis of a linear system in feedback interconnection with an NN controller in~\citep{yin2020stability} to offset-free setpoint tracking with a piecewise constant reference.  In comparison to simple linear feedback controllers, NN controllers may provide more flexibility, e.g., they may be used to achieve additional goals beside the mere stabilization of the closed loop using problem-specific objective functions. Note that \cite{yin2020stability} require a stability analysis tailored to any possible steady state. In contrast, we are able to derive stability results for a whole set of references, considering a feedback interconnection with an external input and an NN controller in combination with an integrator feedback, that avoids steady-state offsets. 
We show that via online modification of the reference, a reference governor ensures stability in an extended region of attraction (RoA). 
In this paper, we give three main theorems that verify stability based on linear matrix inequalities (LMIs). First, we discuss global stability of the suggested feedback interconnection using a Lyapunov function. We proceed with a local stability condition for a fixed reference which is less restrictive than the global stability result, also analyzing the resulting RoA. Finally, we formulate a local stability condition for all references from a specified range and for this case, we show how a reference governor can be used to increase the RoA. 

The paper is organized as follows: In Section \ref{sec:setpoint_tracking}, we present the problem setting and formulate ingredients relevant to the stability analysis. Following, in Section \ref{sec:stability}, we state LMI conditions for global stability and local stability (i) for a fixed constant reference and (ii) for any reference from a certain set which facilitates tracking of piecewise constant references from that set, respectively. In addition, we present an approach to extend the guaranteed RoA using a reference governor. In Section \ref{sec:example}, we provide a numerical example and we summarize our results in Section \ref{sec:conclusion}.

\section{Setpoint tracking}\label{sec:setpoint_tracking}
In this section, we state the problem setting and formulate all ingredients to verify stability in offset-free setpoint tracking using NN controllers. We use $c\in[a,b]$ for vectors $c$ that comply componentwise with the range $[a,b]$ and $|c|$ denotes the componentwise magnitude of the entries of~$c$.

\subsection{Problem setting}
We consider discrete-time linear time-invariant systems
\begin{equation*}\label{eq:G}
\begin{split}
\text{G:}\qquad x_{k+1}&=Ax_k+Bu_k\\
y_k&=Cx_k
\end{split}
\end{equation*}
with state $x_k\in\mathbb{R}^{n_x}$, input $u_k\in\mathbb{R}^{n_u}$, and output $y_k\in\mathbb{R}^{n_r}$, of the same input and output dimension ($n_u=n_r$). 
The control goal is twofold: We aim for (i) stability  of the closed-loop system and (ii)~offset-free tracking of a constant reference~$r$, i.e., $y_k\to r$ for $k\to\infty$.

The considered controller architecture in feedback interconnection with the plant $\text{G}$ is illustrated in Fig.~\ref{fig:Output_feedback}, where the reference $r\in\mathbb{R}^{n_r}$ enters as an external input. The controller $\text{K}$ consists of two parts, (i) an NN controller and (ii) an integrator. In order to track a constant reference~$r$, we use an $l$-layer feed-forward neural network $\kappa(x,r)$ as a controller whose input-output behavior $\kappa: \mathbb{R}^{n_x}\times\mathbb{R}^{n_r}\to \mathbb{R}^{n_r}$ reads
\begin{equation*}
\begin{split}
w^0_k&=H^0_xx_k+H^0_rr\\
\text{NN}:\qquad w^{i+1}_k&=\phi^{i+1}(W^iw^i_k+b^i)\\
\kappa(x_k,r)&=W^lw^l_k+b^l,
\end{split}
\end{equation*}
where $w_k^i\in\mathbb{R}^{n_i}$ are the outputs of the neurons, $W^i\in\mathbb{R}^{n_{i+1}\times n_{i}}$ are weights, $b^i\in\mathbb{R}^{n_{i+1}}$ are biases, and $\phi^{i}:\mathbb{R}^{n_i}\to \mathbb{R}^{n_i}$ is the vector of activation functions of the $i$-th layer for all $i=1,...,l$ hidden layers. The state $x_k$ and the reference $r$ enter the NN controller through the matrices $H_x^0$ and $H_r^0$. While \cite{yin2020stability} consider only state feedback, our description of the NN controller is more general, including the two important special cases of state feedback, i.e., $w^0=x$, $H^0_x=I_{n_x},~H^0_r=0_{n_x\times n_r}$, and output error feedback, i.e., $w^0=r-y$, $H^0_x=-C,~H^0_r=I_{n_r\times n_r}$. The case of output error feedback is especially relevant as it requires the output measurement only, not the full state or a costly estimate thereof. Yet, state information may still be used during training of the NN to improve performance, cf. \citep{zhang2016learning}.
By adding an integrator feedback to the controller with dynamics 
 \begin{equation*}
\xi_{k+1}=\xi_k+r-y_{k},
\end{equation*} 
introducing the additional state $\xi_k$, an undesired steady-state offset is eliminated independent of the structure of the weights and biases of the NN. The resulting input to the plant $\text{G}$ then is $u_k=k_{\xi}\xi_k+\kappa(x_k,r)$. Here, $k_\xi\in\mathbb{R}^{n_r\times n_r}$ is a constant gain that is assumed to be invertible which is required for offset-free tracking. 
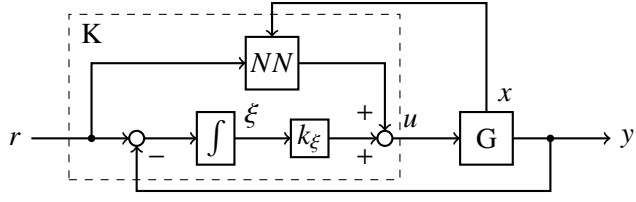
\begin{figure}
\centering
		\begin{tikzpicture}
			\draw[->,thick] (0.3,0) node[left]{$r$} -- (1.6,0);	
			\filldraw[fill=white, thick] (1.7,0) node[below right]{$-$} circle (0.1cm);
			\fill (1.1,0) circle (0.05cm);
			\draw[->,thick] (1.1,0) -- (1.1,1) -- (3.15,1);
			\draw[->,thick] (1.8,0) -- (2.5,0);
			\path (3.5,1) node {$NN$};
			\draw[thick] (3.15,0.65) rectangle (3.85,1.35);
			\path (2.75,0) node {$\int$};
			\draw[thick] (2.5,-0.35) rectangle (3,0.35);
			\path (4,0) node {$k_\xi$};
			\draw[thick] (3.75,0.25) rectangle (4.25,-0.25);
			\draw[->,thick] (5.1,0) node[above right ]{$u$} -- (6,0);	
			\filldraw[fill=white, thick] (5,0) node[below left]{$+$} circle (0.1cm);
			\draw[->,thick] (3.85,1) -- (5,1) -- (5,0.1) node[above left]{$+$} ;
			\draw[->,thick] (3,0) node[above right ]{$\xi$} -- (3.75,0);
			\draw[->,thick] (4.25,0)  -- (4.9,0);
			\path (6.35,0) node {G};
			\draw[thick] (6,-0.35) rectangle (6.7,0.35);
			\draw[->,thick] (6.35,0.35) node[above right]{$x$} -- (6.35,1.8) -- (3.5,1.8)  -- (3.5,1.35);
			\draw[->,thick] (6.7,0) -- (8,0) node[right] {$y$};
			\fill (7.2,0) circle (0.05cm);
			\draw[->,thick] (7.2,0) -- (7.2,-0.7) -- (1.7,-0.7)  -- (1.7,-0.1);
			\draw[dashed] (0.8,-0.55) rectangle (5.2,1.65);
			\path (1.1,1.4) node {K};
		\end{tikzpicture}
		\vspace{-0.2cm}
\caption{Feedback interconnection of plant $\text{G}$ and controller $\text{K}$, consisting of NN controller and integrator feedback $k_\xi$.}
\vspace{-0.4cm}
\label{fig:Output_feedback}
\end{figure}

Introducing the augmented state $\tilde{x}=\begin{bmatrix}x^\top & \xi^\top\end{bmatrix}^\top$ yields the overall dynamics of the augmented system
\begin{equation}\label{eq:G_tilde}
\begin{split}
\widetilde{\text{G}}:\qquad \tilde{x}_{k+1}&=\tilde{A}\tilde{x}_k+\tilde{B}\tilde{u}_k+B_rr\\
y_k&=\tilde{C}\tilde{x}_k
\end{split}
\end{equation}
with $\tilde{u}=\kappa(x,r)$ and the reference $r$ as inputs, where
\begin{equation*}
\tilde{A}=\begin{bmatrix}A & Bk_{\xi}\\ -C & I_{n_r}\end{bmatrix},~\tilde{B}=\begin{bmatrix}B\\0\end{bmatrix},~\tilde{C}=\begin{bmatrix}C & 0\end{bmatrix},~B_r=\begin{bmatrix}0\\I_{n_r}\end{bmatrix}.
\end{equation*}

For a given constant reference $r$ that is tracked with zero steady-state offset according to the defined control goal, the corresponding state and input steady state $(x_*,u_*)$ satifies
\begin{equation}\label{eq:steady_state}
\begin{bmatrix}
0\\r
\end{bmatrix}
=
\underbrace{\begin{bmatrix}
A-I & B\\
C & 0
\end{bmatrix}}_{\eqqcolon A_a}
\begin{bmatrix}
x_*\\u_*
\end{bmatrix}.
\end{equation}
Throughout this paper, we assume that the matrix $A_a$ is square and non-singular, which is a standard condition in tracking \citep{pannocchia2003disturbance}. For any given reference $r$, this leads to a unique steady state $(x_*, u_*)$  that achieves the desired output $r=Cx^*$. With the information of the steady state $(x_*, u_*)$, we can determine the corresponding unique steady state $\xi_*=k_\xi^{-1}(u_*-\kappa(x_*,r))$ and the stationary value $w^i_*$ from a forward propagation of $x_*$ and $r$ through the NN.

With the corresponding steady state $(\tilde{x}_*,\tilde{u}_*)$ of \eqref{eq:G_tilde}, we formulate the incremental dynamics
\begin{equation*}
\begin{split}
\tilde{x}_{k+1}-\tilde{x}_*&=\tilde{A}(\tilde{x}_k-\tilde{x}_*)+\tilde{B}(\tilde{u}_k-\tilde{u}_*),\\
y_k-r&=\tilde{C}(\tilde{x}_k-\tilde{x}_*).
\end{split}
\end{equation*}
Stated in error coordinates $e_k:=\tilde{x}_k-\tilde{x}_*$, the plant dynamics are independent of the reference $r$, which will be exploited in the analysis. 
The goal of offset-free setpoint tracking is achieved if $e_k\to0$ for $k\to\infty$.

\subsection{Slope restriction}
Like it was done by \cite{fazlyab2019efficient} and many subsequent works in the field, e.g., in \citep{pauli2020training, revay2020contracting, yin2020stability}, we exploit that activation functions in NNs are slope-restricted. This is an incremental sector condition on the individual activation functions of each of the neurons of the NN. We define $v^{i+1}=W^iw^i+b^i$ such that $w^i=\phi(v^i)$ and denote the input to the $j$-th neuron of the $i$-th layer at time $k$ by $v_k^{i,j}$ and the steady-state input to the same neuron by $v_*^{i,j}$. Throughout this paper, we consider nonlinear activation functions $\varphi:\mathbb{R}\to\mathbb{R}$ that are slope-restricted on the interval $[\alpha,\beta]$, i.e.,
\begin{equation}\label{eq:sector_bounds}
\alpha \leq \frac{\varphi(v^{i,j})-\varphi(v^{i,j}_*)}{v^{i,j}-v^{i,j}_*} \leq \beta \quad \forall i\in 1,\dots,l\quad\forall j\in 1,\dots,n_i.
\end{equation}
with $0\leq\alpha<\beta$. This applies to the most common activation functions, e.g., to ReLU and $\tanh$ with $\alpha=0,~\beta=1$. The property of slope-restriction can then be stated as an incremental quadratic constraint
\begin{equation*}\label{eq:QC_NN_1_neuron}
\begin{bmatrix}
v^{i,j}-v^{i,j}_*\\
\varphi(v^{i,j})-\varphi(v^{i,j}_*)
\end{bmatrix}^\top
\begin{bmatrix}
-2\alpha\beta & (\alpha+\beta)\\
(\alpha+\beta) & -2
\end{bmatrix}
\begin{bmatrix}
v^{i,j}-v^{i,j}_*\\
\varphi(v^{i,j})-\varphi(v^{i,j}_*)
\end{bmatrix}\geq 0.
\end{equation*}
We stack up all activations $\phi=[{\phi^1}^\top,\cdots,{\phi^l}^\top]^\top:\mathbb{R}^n\to\mathbb{R}^n$ with $n=\sum_{i=1}^l n_i$ as well as $v=[{v^1}^\top,\cdots,{v^l}^\top]^\top$ and $w=[{w^1}^\top,\cdots,{w^l}^\top]^\top$ and introduce multipliers $\lambda_{i}\geq0$ to formulate the incremental quadratic constraint for all neurons
\begin{equation}\label{eq:QC_NN}
\begin{bmatrix}
v-v_*\\
\phi(v)-\phi(v_*)
\end{bmatrix}^\top
\begin{bmatrix}
-2\alpha\beta \Lambda & (\alpha+\beta)\Lambda\\
(\alpha+\beta)\Lambda & -2\Lambda
\end{bmatrix}
\begin{bmatrix}
v-v_*\\
\phi(v)-\phi(v_*)
\end{bmatrix}\geq 0
\end{equation}
with the diagonal matrix $\Lambda\in\mathbb{D}_+^n\coloneqq\left\{\lambda_i\in \mathbb{R}\mid\Lambda=\mathrm{diag}(\lambda_1,\dots,\lambda_n), \lambda_i\geq0\right\}$. 
\cite{fazlyab2019efficient} claimed that \eqref{eq:QC_NN} holds for a richer class of multipliers $\Lambda$. However, as shown by counterexample in \citep{pauli2020training}, this is not true and the restriction to diagonal matrices $\Lambda$ is indeed necessary.
    
\subsection{Basis transformations}
For convenience, we use a compact notation similar to \cite{yin2020stability} by defining
\vspace{-0.2cm}
\begin{equation}\label{eq:basis_transformations}
N_0\coloneqq\begin{bmatrix} N_0^{1} \\ 0\end{bmatrix}=\begin{bmatrix} W^0H^0_x & 0 \\ 0 & 0 \end{bmatrix},~N_{1:l-1}\coloneqq\begin{bmatrix} 0 & \cdots & 0 & 0\\ W^1 & \dots & 0 & 0\\ 0 & \ddots &\vdots  & \vdots\\ 0 & \cdots & W^{l-1} & 0 \end{bmatrix},~N_l\coloneqq\begin{bmatrix} 0 & W^l \end{bmatrix}.
\end{equation}
Using these definitions, the mapping of the NN controller is characterized by 
\begin{equation*}
\begin{bmatrix}
\tilde{x}-\tilde{x}_*\\
\tilde{u}-\tilde{u}_*
\end{bmatrix}=
\underbrace{\begin{bmatrix}
I_{n_{\tilde{x}}} & 0_{n_{\tilde{x}}\times n}  \\
0_{n_u\times n_{\tilde{x}}} & N_l
\end{bmatrix}}_{:=R_V}
\begin{bmatrix}
{\tilde{x}}-{\tilde{x}}_*\\
w-w_*
\end{bmatrix},~
\begin{bmatrix}
v-v_*\\
w-w_*
\end{bmatrix}=
\underbrace{\begin{bmatrix}
N_0 & N_{1:l-1}  \\
0_{n\times n_{\tilde{x}}} & I_{n}
\end{bmatrix}}_{:=R_{\phi}}
\begin{bmatrix}
{\tilde{x}}-{\tilde{x}}_*\\
w-w_*
\end{bmatrix}.
\end{equation*}

\section{Stability analysis}\label{sec:stability}
In this section, we state our main results to verify stability in offset-free setpoint tracking. First, we provide a theorem to show global stability and then state two local stability results with inner approximations of the corresponding RoA. Finally, we show how an extended RoA is accessible via a reference governor.
\subsection{Global stability}
In this section, we state an LMI condition that verifies global stability of the feedback interconnection of $\widetilde{\text{G}}$ and $\kappa(x,r)$, with reference $r$ as an external input. Global stability indicates that the system converges to its steady state and achieves offset-free tracking for all references and all initial conditions.
\begin{theorem}\label{th:global}
Suppose there exist matrices $\Lambda\in\mathbb{D}_+^n$, $P\succ0$, such that
\begin{equation}\label{eq:LMI}
R_V^\top
\begin{bmatrix}
\tilde{A}^\top P \tilde{A}-P & \tilde{A}^\top P \tilde{B}\\
\tilde{B}^\top P \tilde{A} & \tilde{B}^\top P \tilde{B}
\end{bmatrix}
R_V+
R_{\phi}^\top
\begin{bmatrix}
-2\alpha\beta \Lambda & (\alpha+\beta)\Lambda\\
(\alpha+\beta)\Lambda & -2\Lambda
\end{bmatrix}
R_{\phi}\prec0 
\end{equation}
holds.
Then, for any initial condition $\tilde{x}_0\in \mathbb{R}^{n_{\tilde{x}}}$ and for any reference $r\in \mathbb{R}^{n_r}$, the NN controller $\kappa(x,r)$ ensures exponential stability of the steady state $\tilde{x}_*$ for the closed-loop system~\eqref{eq:G_tilde} and achieves zero steady-state offset, i.e., $y_k\to r$ for $k\to\infty$.
\end{theorem}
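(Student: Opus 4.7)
The plan is to perform a standard dissipativity/Lyapunov argument in the error coordinates $e_k := \tilde{x}_k - \tilde{x}_*$, using the quadratic candidate $V(e) = e^\top P e$ with the $P$ from the LMI. The very first step is to verify that the steady state $\tilde{x}_*$ is well-defined for the given reference $r$: by the assumption that $A_a$ is non-singular, equation \eqref{eq:steady_state} gives a unique $(x_*,u_*)$, the forward pass through the NN gives $\kappa(x_*,r)$, and invertibility of $k_\xi$ determines $\xi_*$ uniquely, hence $\tilde{x}_* = [x_*^\top,\xi_*^\top]^\top$ exists and is unique. Passing to error coordinates then makes the dynamics reference-independent, which is precisely why the same LMI suffices for all $r$.

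Next, I would write out $V(e_{k+1}) - V(e_k)$ along the incremental dynamics. Letting $\eta_k := [e_k^\top,\ (w_k-w_*)^\top]^\top$, the augmented input $\tilde{u}_k-\tilde{u}_*$ equals $N_l(w_k-w_*)$, so the basis matrix $R_V$ yields
\begin{equation*}
V(e_{k+1}) - V(e_k) \;=\; \eta_k^\top R_V^\top \begin{bmatrix}\tilde{A}^\top P \tilde{A}-P & \tilde{A}^\top P \tilde{B}\\ \tilde{B}^\top P \tilde{A} & \tilde{B}^\top P \tilde{B}\end{bmatrix} R_V\, \eta_k .
\end{equation*}
Simultaneously, from $v_k - v_* = N_0 e_k + N_{1:l-1}(w_k - w_*)$ and $w_k-w_* = \phi(v_k)-\phi(v_*)$, the slope-restriction inequality \eqref{eq:QC_NN} becomes
\begin{equation*}
\eta_k^\top R_\phi^\top \begin{bmatrix}-2\alpha\beta \Lambda & (\alpha+\beta)\Lambda\\ (\alpha+\beta)\Lambda & -2\Lambda\end{bmatrix} R_\phi\, \eta_k \;\geq\; 0 .
\end{equation*}
Adding a non-negative quantity, the hypothesis \eqref{eq:LMI} (an S-procedure relaxation) directly implies $V(e_{k+1}) - V(e_k) \leq -c\,\|\eta_k\|^2 \leq -c\,\|e_k\|^2$ for some $c>0$ obtained from the strict definiteness margin.

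From here the conclusion is routine: the bound $V(e_{k+1}) \leq (1-c/\lambda_{\max}(P))V(e_k)$ yields exponential decay of $V$, hence exponential decay of $\|e_k\|$ to zero from any initial condition $\tilde{x}_0 \in \mathbb{R}^{n_{\tilde{x}}}$. Because $y_k - r = \tilde{C}e_k$, this immediately gives $y_k \to r$, establishing offset-free tracking.

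The main obstacle I anticipate is purely bookkeeping: cleanly translating the neuron-wise slope condition into the compact $R_\phi$-form and making sure the zero-blocks in $N_0$ (which accommodate the integrator state $\xi$, on which the NN does not act) align correctly with $R_V$ so that the sum of the two quadratic forms acts on the same vector $\eta_k$. Once that alignment is made explicit, the LMI \eqref{eq:LMI} is exactly the certificate that one term dominates the other, and no further technical difficulty arises. An implicit but important point is that the LMI is independent of $r$ because the activation slope bounds $\alpha,\beta$ hold \emph{incrementally} about any steady-state input $v_*$; this is what lets one global certificate cover all references simultaneously.
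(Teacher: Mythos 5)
Your proposal is correct and follows essentially the same route as the paper's proof: a quadratic Lyapunov function $V(\tilde{x})=(\tilde{x}-\tilde{x}_*)^\top P(\tilde{x}-\tilde{x}_*)$ in error coordinates, combined with the incremental quadratic constraint \eqref{eq:QC_NN} via left/right multiplication of the strict LMI \eqref{eq:LMI} by $[(\tilde{x}_k-\tilde{x}_*)^\top~(w_k-w_*)^\top]^\top$ to obtain $V(\tilde{x}_{k+1})-V(\tilde{x}_k)\leq-\epsilon\|\tilde{x}_k-\tilde{x}_*\|^2$. Your additional remarks on the well-posedness of $\tilde{x}_*$ and on why the incremental slope bounds make the certificate reference-independent are correct and merely make explicit what the paper leaves implicit.
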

\begin{proof}
We use the fact that the incremental quadratic constraint \eqref{eq:QC_NN} holds. Given that \eqref{eq:LMI} is strict, there exists an $\epsilon>0$, such that right/left multiplying \eqref{eq:LMI} with $[(\tilde{x}_k-\tilde{x}_*)^\top~(w_k-w_*)^\top]^\top$ yields
\begin{equation}\label{eq:LMI_pf}
\begin{bmatrix}
\tilde{x}_k-\tilde{x}_*\\
\tilde{u}_k-\tilde{u}_* 
\end{bmatrix}^\top
\begin{bmatrix}
\tilde{A}^\top P \tilde{A}-P & \tilde{A}^\top P \tilde{B}\\
\tilde{B}^\top P \tilde{A} & \tilde{B}^\top P \tilde{B}
\end{bmatrix}
\begin{bmatrix}
\tilde{x}_k-\tilde{x}_*\\
\tilde{u}_k-\tilde{u}_*
\end{bmatrix}\leq -\epsilon\|\tilde{x}_k-\tilde{x}_*\|^2.
\end{equation}
We define the Lyapunov function $V(\tilde{x}):=(\tilde{x}-\tilde{x}_*)^\top P (\tilde{x}-\tilde{x}_*)$ and translate \eqref{eq:LMI_pf} into $V(\tilde{x}_{k+1})-V(\tilde{x}_k)\leq-\epsilon \| \tilde{x}_k-x_*\|^2$ which shows global exponential stability.
\end{proof}

While this result can verify global stability of the considered feedback interconnection, in practice, NN controllers are usually trained on training data from a problem-relevant region resulting from physical constraints of the environment. Oftentimes, NN controllers hence perform well at and around the training data set, yet show unpredictable behavior on unknown regions. For that reason, (i) it is unlikely to have a globally stabilizing NN controller and (ii) mostly local stability properties corresponding to the relevant data region are sufficient in practice. Imposing local stability is less restrictive and hence allows for a better performance. Therefore, in the next subsection, we present a result to certify local stability.

\begin{remark}
Although the focus of this paper is on analysis, the resulting LMI \eqref{eq:LMI} can be incorporated in the training of NN controllers analogous to \cite{pauli2020training,revay2020convex}.
\end{remark}

\subsection{Local stability for a fixed reference}\label{sec:local_1}
As considering local sector bounds that are tighter than the global ones can reduce conservatism, we formulate a local stability condition for the feedback interconnection of $\widetilde{\text{G}}$ and $\kappa(x_k,r)$, with a fixed constant reference $r$ as an external input. In \citep{yin2020stability}, a similar local analysis was performed for the reference-free case.  First, in this subsection, we consider local stability for a given reference~$r$ and in the next subsection, we provide a local stability result for references from a set.

For a given constant reference $r$, we choose some $d\in\mathbb{R}^{n_1}, d_i>0$ and form a symmetrical set around the steady state $v^1_*$ of the inputs to the neurons of the first hidden layer $v^1\in[v^1_*-d,v^1_*+d]=[\underline{v}^1,\overline{v}^1]$. With the bounds on $v^1$, we can explicitly determine the bounds on all entries of $v\in[\underline{v},\overline{v}]$ and $w\in[\underline{w},\overline{w}]$. Similar to \cite{yin2020stability}, we define the local incremental sector bounds $\alpha_\phi^{i,j}$ and $\beta_\phi^{i,j}$ of the $j$-th neuron of the $i$-th layer of the NN such that the local slope-restriction property
\begin{equation*}
\alpha_\phi^{i,j} \leq \frac{\varphi(v^{i,j})-\varphi(v^{i,j}_*)}{v^{i,j}-v^{i,j}_*} \leq \beta_\phi^{i,j} \quad \forall i\in 1,\dots,l\quad\forall j\in 1,\dots,n_i
\end{equation*}
holds for all elements of $v\in[\underline{v},\overline{v}]$ and such that the bounds are tight. Moreover, we define the diagonal matrices
\begin{align*}
\alpha_\phi^i\coloneqq\textrm{diag}(\alpha_\phi^{i,j}),~j=1,\dots,n_i,\quad\alpha_\phi\coloneqq\textrm{blkdiag}(\alpha_\phi^i),~j=1,\dots,l,\\
\beta_\phi^i\coloneqq\textrm{diag}(\beta_\phi^{i,j}),~j=1,\dots,n_i,\quad\beta_\phi\coloneqq\textrm{blkdiag}(\beta_\phi^i),~j=1,\dots,l,
\end{align*}
the ellipsoidal set $\mathcal{E}_P(\tilde{x}_*):=\left\{\tilde{x}\in\mathbb{R}^{n_x}\mid(\tilde{x}-\tilde{x}_*)^\top P(\tilde{x}-\tilde{x}_*)\leq 1\right\}$, and we have $v^1-v^1_*=N_0^1(\tilde{x}-\tilde{x}_*)$ with $N_0^1$ as defined in \eqref{eq:basis_transformations}.
The index $j$ in $N_{0,j}^1$ and $d_j$ denotes the $j$-th row of $N_0^1$ and respectively, the $j$-th entry of $d$. With these definitions, we state an LMI condition for local stability.
\begin{theorem}\label{th:local_1}
Suppose there exist matrices $\Lambda\in\mathbb{D}_+^n$, $P\succ0$ such that
\begin{subequations}\label{eq:local_1}
\begin{equation}\label{eq:LMI_local}
R_V^\top
\begin{bmatrix}
\tilde{A}^\top P \tilde{A}-P & \tilde{A}^\top P \tilde{B}\\
\tilde{B}^\top P \tilde{A} & \tilde{B}^\top P \tilde{B}
\end{bmatrix}
R_V+
R_x^\top
\begin{bmatrix}
-2\alpha_\phi\beta_\phi \Lambda & (\alpha_\phi+\beta_\phi)\Lambda\\
(\alpha_\phi+\beta_\phi)\Lambda & -2\Lambda
\end{bmatrix}
R_x\prec0,
\end{equation}
\begin{equation}\label{eq:LMI_RoA}
\begin{bmatrix}
d^2_j  & N_{0,j}^1\\
{N_{0,j}^1}^\top & P
\end{bmatrix}
\succeq0, \qquad \forall j=1,\dots, n_1
\end{equation}
\end{subequations}
holds. Then, for any initial condition $\tilde{x}_0\in \mathcal{E}_P(x_*)$ the NN controller $\kappa(x,r)$ ensures exponential stability of the steady state $\tilde{x}_*$ for the closed-loop system \eqref{eq:G_tilde} and achieves zero steady-state offset, i.e., $y_k\to r$ for $k\to\infty$.
\end{theorem}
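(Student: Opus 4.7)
The plan is to extend the global Lyapunov argument of Theorem~\ref{th:global} by (a) replacing the global sector multiplier $(\alpha,\beta)$ with the tight local bounds $(\alpha_\phi,\beta_\phi)$, and (b) ensuring that the ellipsoid $\mathcal{E}_P(\tilde{x}_*)$ is forward invariant along the closed-loop trajectories so that these local bounds remain valid at every time step. The same quadratic Lyapunov function $V(\tilde{x})=(\tilde{x}-\tilde{x}_*)^\top P(\tilde{x}-\tilde{x}_*)$ is used, and the target inequality $V(\tilde{x}_{k+1})-V(\tilde{x}_k)\le -\epsilon\|\tilde{x}_k-\tilde{x}_*\|^2$ is again the goal; the genuinely new work is to (i) locate $\mathcal{E}_P(\tilde{x}_*)$ inside the region where the local slope bounds apply, and (ii) close an induction that propagates invariance one step at a time.

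For (i), I would apply a Schur complement to each LMI in \eqref{eq:LMI_RoA}, which is equivalent to $P\succeq \frac{1}{d_j^2}(N_{0,j}^1)^\top N_{0,j}^1$ for every $j=1,\dots,n_1$. Then, for any $\tilde{x}\in\mathcal{E}_P(\tilde{x}_*)$,
\begin{equation*}
(v^{1,j}-v^{1,j}_*)^2 = (N_{0,j}^1(\tilde{x}-\tilde{x}_*))^2 \le d_j^2\,(\tilde{x}-\tilde{x}_*)^\top P(\tilde{x}-\tilde{x}_*) \le d_j^2,
\end{equation*}
so $v^1\in[\underline{v}^1,\overline{v}^1]$ on the entire ellipsoid. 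By construction, $\alpha_\phi$ and $\beta_\phi$ are the tight local slope bounds obtained by forward-propagating this box through the network, so the incremental quadratic constraint \eqref{eq:QC_NN} with $(\alpha_\phi,\beta_\phi)$ in place of $(\alpha,\beta)$ holds for every $\tilde{x}\in\mathcal{E}_P(\tilde{x}_*)$. Right/left multiplying \eqref{eq:LMI_local} by $[(\tilde{x}_k-\tilde{x}_*)^\top,(w_k-w_*)^\top]^\top$ and absorbing this constraint, exactly as in the proof of Theorem~\ref{th:global}, then yields $V(\tilde{x}_{k+1})-V(\tilde{x}_k)\le -\epsilon\|\tilde{x}_k-\tilde{x}_*\|^2$ for every $\tilde{x}_k\in\mathcal{E}_P(\tilde{x}_*)$.

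For (ii), I would proceed by induction on $k$: assuming $\tilde{x}_k\in\mathcal{E}_P(\tilde{x}_*)$, the decrease inequality gives $V(\tilde{x}_{k+1})\le V(\tilde{x}_k)\le 1$, hence $\tilde{x}_{k+1}\in\mathcal{E}_P(\tilde{x}_*)$; the base case is the hypothesis $\tilde{x}_0\in\mathcal{E}_P(\tilde{x}_*)$. This delivers forward invariance and, together with the decrease, exponential convergence $\tilde{x}_k\to\tilde{x}_*$; convergence of the integrator state in particular forces $r-y_k\to 0$, establishing the offset-free tracking claim. The main obstacle is precisely the chicken-and-egg nature of this argument: the local sector bounds used in \eqref{eq:LMI_local} are only justified on $\mathcal{E}_P(\tilde{x}_*)$, while forward invariance of $\mathcal{E}_P(\tilde{x}_*)$ itself follows only from the decrease inequality derived under those bounds. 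The Schur-complement containment established in (i) is the key device that lets the induction close cleanly at every step, and identifying $d$ (and hence the ellipsoid size via \eqref{eq:LMI_RoA}) so that \eqref{eq:LMI_local} remains feasible is the practical bottleneck to be handled, e.g., by bisection over a scaling of $d$.
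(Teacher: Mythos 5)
Your proposal is correct and follows essentially the same route as the paper: the paper's own proof is a one-line deferral to the derivation in Yin et al.'s Theorem 1, and the argument you spell out (Schur complement of \eqref{eq:LMI_RoA} to place $\mathcal{E}_P(\tilde{x}_*)$ inside the region where $v^1\in[\underline{v}^1,\overline{v}^1]$, then the local quadratic constraint plus \eqref{eq:LMI_local} for the Lyapunov decrease, then induction for forward invariance) is exactly that derivation, as also mirrored explicitly in the paper's proof of Theorem~\ref{th:local_2}. No gaps.
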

\begin{proof}
The proof is a direct extension of Theorem~\ref{th:global} using the derivation in~\cite[Thm.~1]{yin2020stability}.
\end{proof}

In order to maximize the RoA, instead of solving solely the feasibility problem \eqref{eq:local_1}, we can solve the semidefinite program with the objective $\min_{P,\Lambda} \mathrm{trace}{P}$ and LMI constraints \eqref{eq:local_1}.
Even though Theorem~\ref{th:local_1} verifies local stability and therefore, as we argued above, it is potentially less conservative than Theorem~\ref{th:global}, this result still is somewhat limiting due to the fact that it is stated for a fixed reference~$r$. Tracking a different constant reference in general necessitates to recalculate the incremental sector bounds $\alpha_\phi$ and $\beta_\phi$. For example, for state feedback, $v^1_*$ changes with $r$ and it is required to newly determine the bounds on $v$ and $w$ as well as the incremental sector bounds $\alpha_\phi$ and $\beta_\phi$ for a different reference $r$. Finally, the LMI conditions \eqref{eq:local_1} need to be checked again with the new values of $\alpha_\phi$ and $\beta_\phi$. For the special case of output error feedback, $v_*$ conveniently remains the same for all references~$r$, given that $v^1_*=W^0(r-Cx_*)+b^0=b^0~\forall r$ with \eqref{eq:steady_state}, such that $\alpha_\phi$ and $\beta_\phi$ remain the same for different references~$r$. This means in the case of output error feedback, if Theorem~\ref{th:local_1} holds, the controller $\kappa(x,r)$ ensures local exponential stability of the closed-loop system~\eqref{eq:G_tilde} for all references $r\in\mathbb{R}^{n_r}$ and all intial conditions $\tilde{x}_0\in\mathcal{E}_P(\tilde{x}_*)$, i.e., an ellipsoidal set whose size is independent of the reference $r$ and that is centered at the corresponding setpoint $\tilde{x}_*$. This means that for error output feedback local stability depends only on the tracking error.

\subsection{Local stability for a range of references}
In this subsection, we state a more general local stability result that guarantees stability for all references from a certain set.
According to \eqref{eq:steady_state}, the steady-state manifold is characterized by 
\begin{equation*}
\begin{bmatrix}
x_*\\u_*
\end{bmatrix}=
A_a^{-1}
\begin{bmatrix}
0\\r
\end{bmatrix}=
\begin{bmatrix}
* & M\\
* & *
\end{bmatrix}
\begin{bmatrix}
0\\r
\end{bmatrix}.
\end{equation*}
Thus, we obtain a linear map $x_*(r)=Mr$. Using this map and the fact that we use an offset-free tracking formulation (due to the integrator), we can characterize the set of setpoints  $x_*(r)$ for all admissible references. 
A forward pass of $x_*(r)$ through the NN gives the values of $v_*(r)$ and $w_*(r)$, as well. Carrying out the stability analysis for a range of references necessitates to determine the incremental sector bounds $\alpha_\phi$ and $\beta_\phi$ according to Subsection \ref{sec:local_1} for a given nominal reference $r_\textrm{nom}\in\mathbb{R}^{n_r}$ and a given vector $d\in\mathbb{R}^{n_1},~d_i>0$.
This time, we aim to find an underestimate of the RoA formulated in the augmented state $\tilde{x}$ and the desired reference $r$ and therefore, we introduce the ellipsoidal set
\begin{equation*}
\mathcal{E}_{P,Q}(r_\textrm{nom}):=\left\{\tilde{x}\in\mathbb{R}^{n_{\tilde{x}}},~r\in\mathbb{R}^{n_r}\mid(\tilde{x}-\tilde{x}_*(r))^\top P(\tilde{x}-\tilde{x}_*(r))+(r-r_\textrm{nom})^\top Q(r-r_\textrm{nom})\leq 1\right\}.
\end{equation*} 
This choice is natural as it addresses the trade-off between the initial conditions $\tilde{x}_0$ and the choice of the constant reference $r$. For $r=r_\textrm{nom}$, we retrieve the ellipsoidal set $\mathcal{E}_P(\tilde{x}_*)$. The ellipse in $\tilde{x}$ corresponding to $r=r_\textrm{nom}$ is the largest possible for the given vector $d$, whereas for references $r$ that deviate from the nominal reference $r_\textrm{nom}$, the corresponding ellipsoidal set in $\tilde{x}$ shrinks: $(\tilde{x}-\tilde{x}_*(r))^\top P(\tilde{x}-\tilde{x}_*(r))\leq 1-(r-r_\textrm{nom})^\top Q(r-r_\textrm{nom})$. Finally, the union of the ellipses for all $r\in\mathcal{E}_{P,Q}(r_\textrm{nom})$ provides an enlarged guaranteed RoA if the reference is adjusted online, cf. Subsection \ref{sec:reference_governor}.

\begin{theorem}\label{th:local_2}
Suppose there exist matrices $\Lambda\in\mathbb{D}_+^n$, $P\succ0$, $Q\succ0$ such that
\begin{subequations}
\begin{equation}\label{eq:LMI_local_2}
R_V^\top
\begin{bmatrix}
\tilde{A}^\top P \tilde{A}-P & \tilde{A}^\top P \tilde{B}\\
\tilde{B}^\top P \tilde{A} & \tilde{B}^\top P \tilde{B}
\end{bmatrix}
R_V+
R_x^\top
\begin{bmatrix}
-2\alpha_\phi\beta_\phi \Lambda & (\alpha_\phi+\beta_\phi)\Lambda\\
(\alpha_\phi+\beta_\phi)\Lambda & -2\Lambda
\end{bmatrix}
R_x\prec0,
\end{equation}
\begin{equation}\label{eq:LMI_RoA_2}
\begin{bmatrix}
d^2_j  & \begin{bmatrix}N_0^1 & N_0^1M\end{bmatrix}_j\\
\begin{bmatrix}N_0^1 & N_0^1M\end{bmatrix}_j^\top & \begin{bmatrix} P & 0 \\ 0 & Q \end{bmatrix}
\end{bmatrix}
\succeq 0, \qquad \forall j=1,\dots, n_1
\end{equation}
\end{subequations}
holds. Then, for all pairs of initial conditions and references $(\tilde{x}_0,r) \in\mathcal{E}_{P,Q}(r_\textrm{nom})$ the NN controller $\kappa(x,r)$ ensures exponential stability of the steady state $\tilde{x}_*(r)$ for the closed-loop system \eqref{eq:G_tilde} and achieves zero steady-state offset, i.e., $y_k\to r$ for $k\to\infty$.
\end{theorem}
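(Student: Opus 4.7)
The plan is to generalize the argument of Theorem~\ref{th:local_1} by treating the reference $r$ as a constant external input and explicitly tracking its effect on the steady state $\tilde{x}_*(r)$ and the first-layer pre-activation $v^1_*(r)$. I fix an arbitrary admissible pair $(\tilde{x}_0,r)\in\mathcal{E}_{P,Q}(r_\textrm{nom})$ and pick the candidate Lyapunov function $V(\tilde{x}) := (\tilde{x}-\tilde{x}_*(r))^\top P (\tilde{x}-\tilde{x}_*(r))$. Since $r$ is time-invariant along the closed-loop trajectory, the initial value satisfies $V(\tilde{x}_0)\leq 1-(r-r_\textrm{nom})^\top Q(r-r_\textrm{nom})$.

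The second step is to show that on $\mathcal{E}_{P,Q}(r_\textrm{nom})$ the pre-activation $v^1$ stays inside the $d$-box around $v^1_*(r_\textrm{nom})$ on which the local sector bounds $(\alpha_\phi,\beta_\phi)$ were computed. Using the affine map $x_*(r)=Mr$ and the definition of $v^1_*$, I rewrite $v^1-v^1_*(r_\textrm{nom})=\bigl[N_0^1,\; N_0^1 M\bigr]\begin{bmatrix}\tilde{x}-\tilde{x}_*(r)\\ r-r_\textrm{nom}\end{bmatrix}$; a Schur complement of \eqref{eq:LMI_RoA_2} together with the Cauchy--Schwarz inequality then gives $|v^{1,j}-v^{1,j}_*(r_\textrm{nom})|\leq d_j$ component-wise whenever $(\tilde{x},r)\in\mathcal{E}_{P,Q}(r_\textrm{nom})$. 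The steady-state value $v^1_*(r)$ itself also lies in this $d$-box (the $r$-only part of the same estimate bounds its offset from $v^1_*(r_\textrm{nom})$), so both arguments of every slope quotient lie in the box where the local slope bounds are tight. Forward propagation through the subsequent layers extends this to all $v^{i,j}$, so the incremental quadratic constraint \eqref{eq:QC_NN} with $(\alpha_\phi,\beta_\phi)$ holds between the trajectory values $v$ and the $r$-dependent steady-state values $v_*(r)$.

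The third step is a direct replay of the proofs of Theorems~\ref{th:global} and~\ref{th:local_1}: right/left-multiplying \eqref{eq:LMI_local_2} with $\bigl[(\tilde{x}_k-\tilde{x}_*(r))^\top\;(w_k-w_*(r))^\top\bigr]^\top$ and invoking the incremental quadratic constraint from Step~2 yields $V(\tilde{x}_{k+1})-V(\tilde{x}_k)\leq -\epsilon\|\tilde{x}_k-\tilde{x}_*(r)\|^2$ for some $\epsilon>0$. Forward invariance of the sublevel set $\{\tilde{x} : V(\tilde{x})\leq 1-(r-r_\textrm{nom})^\top Q(r-r_\textrm{nom})\}$ follows, so $(\tilde{x}_k,r)\in\mathcal{E}_{P,Q}(r_\textrm{nom})$ for all $k$ by induction, and the sector-bound argument of Step~2 remains valid at every step. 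Exponential convergence $\tilde{x}_k\to\tilde{x}_*(r)$ then implies $y_k=\tilde{C}\tilde{x}_k\to Cx_*(r)=r$.

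I expect the main obstacle to be the inductive coupling of Steps~2 and~3: the Lyapunov descent extracted from the LMI is only legitimate once the local sector bounds are known to hold, while establishing those bounds in turn requires the trajectory to remain inside $\mathcal{E}_{P,Q}(r_\textrm{nom})$. Closing this loop relies on the fact that, because $r$ is held constant, it is the joint quadratic $V(\tilde{x})+(r-r_\textrm{nom})^\top Q(r-r_\textrm{nom})$ that is nonincreasing, not merely $V$ on its own; and on a careful interpretation of the block $N_0^1 M$ so that it absorbs both the implicit $x$-dependence of $v^1_*$ through $x_*(r)=Mr$ and, where applicable, the direct dependence through $W^0 H_r^0$. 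Once this bookkeeping is in place, the remainder of the argument is mechanical.
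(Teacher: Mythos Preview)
Your proposal is correct and follows essentially the same route as the paper: Schur complement on \eqref{eq:LMI_RoA_2} to bound $|v^1-v^1_*(r_\textrm{nom})|$ componentwise on $\mathcal{E}_{P,Q}(r_\textrm{nom})$, then replay the Lyapunov argument of Theorem~\ref{th:global} with the local slopes $(\alpha_\phi,\beta_\phi)$ and use the resulting decrease to close the invariance loop. You are in fact more explicit than the paper on two points it leaves implicit---the inductive coupling between invariance and descent, and the need for $v^1_*(r)$ itself (not just $v^1$) to lie in the $d$-box so that the incremental sector bound applies between trajectory and the $r$-dependent steady state---and your caveat about the interpretation of $N_0^1 M$ having to absorb the $W^0H_r^0$ contribution is well taken.
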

\begin{proof}
First, we show that for all $(\tilde{x},r)\in\mathcal{E}_{P,Q}(r_\textrm{nom})$, the input to the first neuron $v^1$ stays between the given bounds, i.e., $v^1\in[\underline{v}^1,\overline{v}^1]$. 
Applying the Schur complement to LMI \eqref{eq:LMI_RoA_2} yields 
\begin{equation}\label{eq:Schur}
\begin{bmatrix}N_0^1 &  N_0^1M\end{bmatrix}_j \begin{bmatrix} P^{-1} & 0 \\ 0 & Q^{-1}\end{bmatrix}\begin{bmatrix}N_0^1 &  N_0^1M\end{bmatrix}_j^\top\leq d_j^2,\quad \forall j=1,\dots,n_1.
\end{equation}
According to Lemma 1 in \citep{hindi1998analysis}, from \eqref{eq:Schur} we get 
\begin{equation*}
\mathcal{E}_{P,Q}(r_\textrm{nom})\subseteq \left\{\tilde{x}\in\mathbb{R}^{n_{\tilde{x}}},~r\in\mathbb{R}^{n_r} \mid \left\vert\begin{bmatrix}N_0^1 &  N_0^1M\end{bmatrix}_j\begin{bmatrix}\tilde{x}-\tilde{x}_*(r) \\ r-r_\textrm{nom}\end{bmatrix}\right\vert\leq d_j,~j=1\dots,n_1\right\}
\end{equation*}
and
\begin{equation*}
\left\vert\begin{bmatrix}N_0^1 &  N_0^1M\end{bmatrix}\begin{bmatrix}\tilde{x}-\tilde{x}_*(r) \\ r-r_\textrm{nom}\end{bmatrix}\right\vert
=\left\vert v^1-v_*^1(r_\textrm{nom})\right\vert\leq d,
\end{equation*}
from which we conclude
$\mathcal{E}_{P,Q}(r_\textrm{nom})\subseteq \left\{\tilde{x}\in\mathbb{R}^{n_{\tilde{x}}},~r\in\mathbb{R}^{n_r} \mid v^1\in[\underline{v}^1,\overline{v}^1]\right\}$.
Hence, the given positive invariance of $\mathcal{E}_{P,Q}(r_\textrm{nom})$ ensures invariance of the bounds $[\underline{v},\overline{v}]$. Closed-loop stability and offset-free tracking then follow from \eqref{eq:LMI_local_2} analogous to the proof of Theorem~\ref{th:global} with $\alpha_\phi$ and $\beta_\phi$ instead of $\alpha$ and $\beta$, and $\mathcal{E}_{P,Q}(r_\textrm{nom})$ is an inner approximation of the RoA. 
\end{proof}

With this result, we can verify stability locally for all combinations of initial conditions $\tilde{x}_0$ and references~$r$ from the ellipsoidal set~$\mathcal{E}_{P,Q}(r_\textrm{nom})$ and thus, it allows for tracking of piecewise constant references. In the next subsection, we present an approach to increase the guaranteed RoA.
\begin{remark}
Model uncertainties, e.g., formulated as integral quadratic constraints, can be incorporated into the feedback interconnection considered in this paper to analyze robust stability and robust estimates of the RoA, cf. \citep{yin2020stability}.
\end{remark}

\subsection{Extended guaranteed RoA via a reference governor}\label{sec:reference_governor}
A reference governor is an augmentation to a locally stabilizing controller that is often employed to ensure satisfaction of state and input constraints. In particular, a reference governor uses the current state measurement to modify the reference that is fed into the controller~\citep{bemporad1998reference}. 

In the given setup of this paper, a reference governor provides a tool to extend the RoA. After computing the matrices $P$ and $Q$ via Theorem~\ref{th:local_2}, at all times $k$, the reference governor checks if the combination of the current state $\tilde{x}_k$ and the reference $r$ lie in the guaranteed RoA $\mathcal{E}_{P,Q}$. If $(\tilde{x}_k, r)\notin\mathcal{E}_{P,Q}$, e.g., due to a change in the reference, then the reference governor adjusts the reference to the closest possible reference $\hat{r}_k$ for which the current state $\tilde{x}_k$ lies in the RoA and the surrogate reference $\hat{r}_k$ is fed into the closed-loop system \eqref{eq:G_tilde}. In this way, even if the desired reference leads to an unstable closed loop, the reference governor ensures stability.
A corresponding surrogate reference $\hat{r}_k$ can be computed according to the following formulation:
\begin{equation}\label{eq:RG}
\hat{r}_k\coloneqq\arg\min_{\hat{r}} \left\|r-\hat{r}\right\|^2~\text{s.t.}~(\tilde{x}_k-\tilde{x}_*(\hat{r}))^\top P(\tilde{x}_k-\tilde{x}_*(\hat{r}))+(\hat{r}-r_\textrm{nom})^\top Q(\hat{r}-r_\textrm{nom})\leq 1.
\end{equation}
The surrogate reference $\hat{r}_k$ is updated at every time step until the current state lies in the guaranteed RoA corresponding to the desired reference $r$ (in case this is possible) such that $\hat{r}_k=r$. Even if the reference $r$ is outside of the guaranteed RoA, the reference closest to the desired $r$ that lies in the estimate of the RoA can be tracked. Overall, the reference governor provides stability independent of the desired reference $r$ and guarantees desirable closed-loop properties for any $\tilde{x}_0$ in the union of the local ellipsoidal sets $\mathcal{E}_{P,Q}$ over all admissible references $r$, cf. Fig.~\ref{fig:ellipses}. Note that this reference governor has similarities to the approach in \citep{enforcing2020donti}, where instead of the reference, the control input is adjusted by projection onto a set that guarantees exponential stability.
\begin{remark}
Note that in the special case of output error feedback, local stability is independent of $r$ and depends only on the tracking error. Hence, a corresponding reference governor is given by 
\begin{equation*}
\hat{r}_k=\arg\min_{\hat{r}} \left\|r-\hat{r}\right\|~\textrm{s.t.}~(\tilde{x}_k-\tilde{x}_*(\hat{r}))^\top P(\tilde{x}_k-\tilde{x}_*(\hat{r}))\leq 1.
\end{equation*}
Thus, in the output error feedback case, we can achieve offset-free tracking for any reference $r\in \mathbb{R}^{n_r}$, if the reference governor is feasible for the initial state.
\end{remark}

\section{Numerical example}\label{sec:example}
In this section, we illustrate our results on an example. The analysis is applicable to all NN controllers within the suggested control architecture, regardless of how they were trained. In the following example, an NN controller approximates a model predictive controller (MPC) which is beneficial in case the MPC is computationally expensive.

Consider the problem of steering an inverted pendulum with mass $m=0.15\,\textrm{kg}$, length $L=0.5\,\textrm{m}$, and friction coefficient $\mu=0.5\,\textrm{Nm/rad}$ to a desired reference position. The linearized dynamics are
\begin{equation}\label{eq:pendulum}
\dot{x}=\begin{bmatrix}
0 & 1\\ \frac{g}{L} & -\frac{\mu}{mL^2}
\end{bmatrix}
x+
\begin{bmatrix}
0 \\ \frac{g}{L}
\end{bmatrix}u
,\quad
y=\begin{bmatrix}0 & 1\end{bmatrix}x
\end{equation}
with state $x=\begin{bmatrix}\theta^\top & \dot{\theta}^\top\end{bmatrix}^\top$, angular position $\theta$, angular velocity  $\dot{\theta}$, and control input $u$.
We use supervised learning to determine a stabilizing controller for the dynamics \eqref{eq:pendulum} discretized with sampling time $T_s=0.02\,\textrm{s}$. For this purpose, we train an NN controller such that it approximates an MPC trained on the discretized and linearized dynamics of the inverted pendulum with input constraint $|u|\leq 1$, similar to~\cite{zhang2016learning}. For training of the NN, we utilize closed-loop trajectories with initial conditions sampled from a uniform distribution $x_0\in\mathcal{U}(-0.5,0.5)$ and we set $k_\xi=1$. We consider an NN, with activation function $\tanh$ and two layers of 5 neurons each, that serves as a state feedback controller, i.e., $w_0=x$. While the approach in \citep{yin2020stability} can only handle zero bias to ensure the setpoint is $x_*=0,~u_*=0$, in our setup no structural constraints on the weights and biases of the NN are necessary, as any undesired offset is eliminated by the integrator.

Theorem~\ref{th:global} is not applicable here, meaning that we cannot show global stability of the closed loop. This is not surprising, given that the NN controller was trained on trajectories from a problem-relevant region. To compute a local RoA, we choose $d_i=0.345,~i=1,\dots,n_1$, and then propagate the resulting bounds $[\underline{v}^1,\overline{v}^1]$ through the NN to obtain the incremental sector bounds $\alpha_\phi$ and $\beta_\phi$. With vector $d$ and the resulting bounds $\alpha_\phi$ and $\beta_\phi$, we apply Theorem~\ref{th:local_1} with the reference $r=0$ (computation time $1.14$s) and Theorem~\ref{th:local_2} with the nominal reference $r_\textrm{nom}=0$ (computation time $0.498$s) using numerical SDP solvers \citep{Lofberg2004, mosek} in Matlab.

Fig. \ref{fig:ellipses} shows the corresponding ellipses $\mathcal{E}_P(\tilde{x}_*(0))$ (blue, solid) and $\mathcal{E}_{P,Q}(0)$ (red, dashed) for different $r\in\mathcal{E}_{P,Q}(0)$, as well as the steady-state manifold $x_*(r)$ of the problem. Based on the result of Theorem~\ref{th:local_2}, we can track any reference angle $r\in[-0.2,0.2]$. Further, note that in this example, the extended RoA, i.e., the union of the ellipses $\mathcal{E}_{P,Q}(0)$, is notably larger than $\mathcal{E}_P(\tilde{x}_*(0))$. We also implemented a reference governor to track the reference $r=-1$. Whenever the constraint in \eqref{eq:RG} is not fulfilled with the current state and the desired reference $r$, the reference governor determines a surrogate reference $\hat{r}_k$ closest to the desired reference. 
In Fig.~\ref{fig:ellipses}, a trajectory $x_{RG}$ for a random initial condition under control with the reference governor is shown. This trajectory tracks the closest point to $r$ within the extended RoA with zero steady-state offset while $x_{RG}$ stays in the extended guaranteed RoA at all times. Note that without the reference governor the trajectory diverges.

\begin{figure}\label{fig:ellipses}
\centering
\input{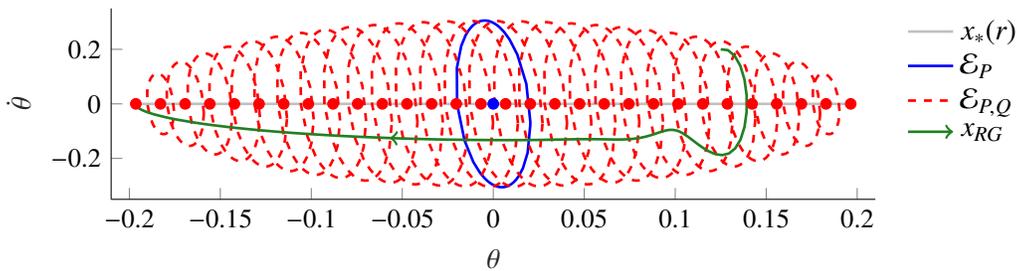}
\vspace{-0.4cm}
\caption{Steady-state manifold $x_*(r)$, ellipsoidal inner approximations of RoA $\mathcal{E}_P$ and $\mathcal{E}_{P,Q}$, and example trajectory $x_{RG}$ using a reference governor for the linearized inverted pendulum.}
\vspace{-0.6cm}
\end{figure}

\section{Conclusion}\label{sec:conclusion}
We proposed a method to analyze local and global stability in offset-free setpoint tracking using NN controllers, also providing ellipsoidal inner approximations of the corresponding RoA.
We derived linear matrix inequalities to verify stability using Lyapunov theory, exploiting the fact that activation functions used in NNs are slope-restricted. First, we showed global stability and then stated less restrictive local stability results (i) for a given reference and (ii) for a set of references, which allows for tracking of piecewise constant references. Based on the latter result, we were able to significantly increase the guaranteed RoA using a reference governor. Finally, we tested our stability analysis on the example of a linearized inverted pendulum.

An interesting direction for future research is to extend the guaranteed RoA by optimizing over the range $[\underline{v}^1,\overline{v}^1]$ of the input to the first layer, that in the current setup is chosen by the user.

\acks{This work was funded by Deutsche Forschungsgemeinschaft (DFG,
German Research Foundation) under Germany’s Excellence Strategy - EXC
2075 - 390740016. The authors thank the International Max Planck Research
School for Intelligent Systems (IMPRS-IS) for supporting Patricia Pauli, Julian Berberich, and Anne Koch.}

\bibliography{references}

\end{document}